\def\eg{{\em e.g.}}
\def\ie{{\em i.e.}}
\def\etal{{\em et al.}}
\newcommand{\ket}[1]{|#1\rangle}
\newcommand{\bra}[1]{\langle#1|}
\newcommand{\braket}[2]{\langle#1|#2\rangle}
\newcommand{\proj}[1]{|#1\rangle\langle#1|}
\newcommand{\set}[1]{\left\{#1\right\}}
\newcommand{\abs}[1]{\left|#1\right|}
\newcommand{\ct}{^{\dagger}}
\newcommand{\tp}{^{\mathsf{T}}}
\newcommand{\ofs}{}
\newcommand{\krange}{{k:\,\lambda_k(s) \neq 1}}
\newcommand{\note}[1]{}
\newcommand{\Hi}{\mathcal{H}}
\newcommand{\Hr}{\mathcal{R}}
\newcommand{\mx}[1]{\begin{pmatrix}#1\end{pmatrix}}
\newcommand{\smx}[1]{\bigl(\begin{smallmatrix}#1\end{smallmatrix}\bigr)}
\newcommand{\rv}[2]{(#1 \;\, #2)}
\newcommand{\co}[2]{[#1,#2)}
\DeclareMathOperator{\spn}{span}
\DeclareMathOperator{\diag}{diag}
\DeclareMathOperator{\HT}{HT}
\DeclareMathOperator{\MT}{MT}
\newcommand{\AdiabaticAlgorithm}{\mathbf{Adiabatic\ Search\ Algorithm}}
\newcommand{\RandomWalkAlgorithm}{\mathbf{Random\ Walk\ Algorithm}}
\newtheorem{theorem}{Theorem}
\newtheorem{lemma}{Lemma}
\theoremstyle{plain}
\newtheorem*{theorem*}{Theorem}
\newsavebox{\fmbox}
\newenvironment{fmpage}[1]
     {\medskip\begin{lrbox}{\fmbox}\begin{minipage}{#1}}
     {\end{minipage}\end{lrbox}\fbox{\usebox{\fmbox}}\medskip}
\newcommand{\algorithm}[1]{
\begin{center}
\begin{fmpage}{11.5cm}
#1
\end{fmpage}
\end{center}}
\begin{document}





\title{On the adiabatic condition and\\the quantum hitting time of Markov chains}


\author{Hari Krovi}\email{krovi@engr.uconn.edu}
\altaffiliation[Present affiliation: ]{University of Connecticut}
\affiliation{NEC Laboratories America, Inc.}
\author{Maris Ozols}\email{marozols@yahoo.com}
\affiliation{NEC Laboratories America, Inc.}
\affiliation{University of Waterloo and Institute for Quantum Computing}
\author{J\'er\'emie Roland}\email{jroland@nec-labs.com}
\affiliation{NEC Laboratories America, Inc.}










\begin{abstract}
\noindent We present an adiabatic quantum algorithm for the abstract problem of searching marked vertices in a graph, or spatial search.
Given a random walk (or Markov chain) $P$ on a graph with a set of unknown marked vertices, one can define a related absorbing walk $P'$ where outgoing transitions from marked vertices are replaced by self-loops.
We build a Hamiltonian $H(s)$ from the interpolated Markov chain $P(s)=(1-s)P+sP'$ and use it in an adiabatic quantum algorithm to drive an initial superposition over all vertices to a superposition over marked vertices.
The adiabatic condition implies that for any reversible Markov chain and any set of marked vertices, the running time of the adiabatic algorithm is given by the square root of the classical hitting time. This algorithm therefore demonstrates a novel connection between the adiabatic condition and the classical notion of hitting time of a random walk. 
It also significantly extends the scope of previous quantum algorithms for this problem, which could only obtain a full quadratic speed-up for state-transitive reversible Markov chains with a unique marked vertex.
\end{abstract}

\pacs{03.67.Lx 05.40.Fb}

\maketitle

\section*{Introduction}

Adiabatic quantum computation was introduced by Farhi \etal\ in \cite{Farhi}. It proceeds as follows. Suppose that the solution of a computational problem can be encoded in the ground state of a problem Hamiltonian $H_P$. We start in the ground state of an initial Hamiltonian $H_0$ which is easy to construct. Then we slowly change the Hamiltonian from $H_0$ to $H_P$, so that the instantaneous Hamiltonian at any point in the evolution is $H(s)=(1-s)H_0+sH_P$, where $0\leq s \leq 1$. If this is done slowly enough, then the Adiabatic Theorem of Quantum Mechanics \cite{Mes} guarantees that the state at all points in the evolution stays close to the ground state of $H(s)$. Note that the validity of the \emph{folk} version of the Adiabatic Theorem, as stated in many books of Quantum Mechanics such as \cite{Mes}, has recently been the subject of much debate~\cite{marzlin-sanders,sarandy,wu,tong}, and there is no rigorous proof of it that holds under full generality. It is nevertheless possible to state a more rigorous version of the theorem, so that the \emph{folk} adiabatic condition can be proved to be sufficient in many interesting cases~\cite{jrs07}, such as the adiabatic version of Grover's algorithm~\cite{grover96,vdam01,rc02}.


Many classical randomized algorithms rely heavily on random walks or Markov chains. The notion of hitting time is a useful characterization of Markov chains used when searching for a marked vertex. Quantum walks are natural generalizations of classical random walks.
The notion of hitting time has been carried over to the quantum case in \cite{AKR,Kempe,Sze,KB,MNRS,MNRS2,VKB}, by generalizing the classical notion in different ways. It is intimately related to the problem of spatial search. Suppose that we are given a graph where some vertices are marked. Classically, a simple algorithm to find a marked vertex is to repeatedly apply some random walk $P$ on the graph until one of the marked vertices is reached. The hitting time of $P$ is precisely the expected number of repetitions necessary to reach a marked vertex, starting from the stationary distribution of $P$. The notions of quantum hitting time are based on different quantum analogues of this algorithm. They usually show some quadratic improvement of the quantum hitting time over the classical hitting time. However, until the present paper, they could only show such a relation under restricted conditions: either the quantum algorithm could only detect marked elements~\cite{Sze}, or it could only be applied to state-transitive reversible Markov chains with a unique marked element~\cite{MNRS2}. Whether this quadratic speed-up for finding a marked element also holds for any reversible Markov chain and for multiple marked elements was an open question. In this paper, we answer this question by the positive, by providing an adiabatic quantum algorithm for this problem. In addition to being more general, the algorithm is also conceptually very clean, it implements a simple rotation in a two dimensional subspace based on a quantum walk on the edges of the graph. Moreover, it reveals a close connection between the adiabatic condition and the notion of hitting time.


The paper is structured as follows. In Section~\ref{sect:related-work} we describe related work and in Section~\ref{sect:main-result} we state our main result. In Section~\ref{sect:Preliminaries} we introduce the necessary concepts such as Markov chains, the discriminant matrix, and the quantization of Markov chains. In Section~\ref{sect:Spectrum} we evaluate the spectrum of the interpolating Hamiltonian and in Section~\ref{sect:Adiabatic theorem} we impose the adiabatic condition to calculate the running time of the adiabatic quantum algorithm. In Section~\ref{sect:Running time} we relate this to the hitting time of the Markov chain we started from, and show that the running time of the adiabatic evolution is at most the square root of the classical hitting time.

\section{Related work} \label{sect:related-work}
Inspired by Ambainis' quantum walk algorithm for Element Distinctness~\cite{Ambainis04}, Szegedy~\cite{Sze} introduced a powerful way of quantizing Markov chains which led to new quantum walk-based algorithms. He showed that for any symmetric Markov chain a quantum walk could detect the presence of marked vertices in at most the square root of the classical hitting time. However, showing that a marked vertex could also be found in the same time (as is the case for the classical algorithm) proved to be a very difficult task. Magniez~\etal~\cite{MNRS} extended Szegedy's approach to the larger class of ergodic Markov chains, and proposed a quantum walk-based algorithm to find a marked vertex, but its complexity may be larger than the square root of the classical hitting time. A typical example where their approach fails to provide a quadratic speed-up is the 2D grid, where their algorithm has complexity $\Theta(n)$, whereas the classical hitting time is $\Theta(n\log n)$. Ambainis~\etal~\cite{AKR} and Szegedy's~\cite{Sze} approaches yield a complexity of $\Theta(\sqrt{n}\cdot\log n)$ in this special case, for a unique marked vertex. Childs and Goldstone~\cite{childs1,childs2} also obtained a similar result using a continuous-time quantum walk.
However, whether a full quadratic speed-up was possible remained an open question, until Tulsi~\cite{Tulsi} proposed a solution involving a new technique. Magniez~\etal~\cite{MNRS2} extended Tulsi's technique to any reversible state-transitive Markov chain, showing that for such chains, it is possible to find a unique marked vertex with a full quadratic speed-up over the classical hitting time. However, the state-transitivity is a strong symmetry condition, and furthermore their technique cannot deal with multiple marked vertices. It would be strange if one had to rely on involved techniques to solve the finding problem under such restricted conditions, while the classical analogue algorithm is conceptually very simple and works under very general conditions. 

In this paper we show that these issues can be resolved by combining the idea of the quantization of Markov chains with adiabatic quantum computation (note that Childs and Goldstone~\cite{childs1,childs2} showed that their algorithm for spatial search on the grid could also be translated into an adiabatic algorithm, but this failed to give a quadratic speed-up for low dimensions).

\section{Main result} \label{sect:main-result}

Before describing our quantum algorithm, let us first recall the classical algorithm on which it will provide a quadratic speed-up. This very simple algorithm just consists in walking randomly on the graph until a marked vertex is reached. More precisely, it relies on a Markov chain $P$ with stationary distribution $\pi$, and works as follows.

\algorithm{
$\RandomWalkAlgorithm$
\begin{enumerate}
 \item Sample a vertex $x \in X$ according to distribution $\pi$.
 \item Check if $x$ is marked.\label{item:check}
 \item If so, output $x$.
 \item Otherwise, update $x$ according to $P$, and go back to step~\ref{item:check}.\label{item:walk}
\end{enumerate}
}

Let $P$ be an ergodic Markov chain, and $M$ be a set of marked vertices. The \emph{hitting time} of $P$ with respect to $M$, denoted by $\HT(P,M)$, is the expected number of executions of step~\ref{item:walk} during the course of the $\RandomWalkAlgorithm$ (where the expectation is calculated conditionally on the initial vertex being unmarked). Note that since the algorithm stops as soon as a marked element is reached, this is equivalent to using an \emph{absorbing} Markov chain $P'$, which acts as $P$ on all but marked vertices, where all outgoing transitions are replaced by self-loops.

Previous attempts at providing a quantum speed-up over this classical algorithm have followed one of these two approaches:
\begin{itemize}
 \item Combining a quantum version of $P$ with a reflection through marked vertices to mimic a Grover operation~\cite{AKR,Ambainis04,MNRS}.
 \item Directly applying a quantum version of $P'$~\cite{Sze,MNRS2}.
\end{itemize}
The problem with these approaches is that they would only be able to find marked vertices in very restricted cases. We explain this by the different nature of random and quantum walks: while both accept a stable state, \ie, the stationary distribution for the random walk and the eigenstate with eigenvalue 1 for the quantum walk, the way both walks act on other states is dramatically different. Indeed, an ergodic random walk will converge to its stationary distribution from any initial distribution. This apparent robustness may be attributed to the inherent randomness of the walk, which will smooth out any initial perturbation. After many iterations of the walk, non-stationary contributions of the initial distribution will be damped and only the stationary distribution will survive (this can be attributed to the thermodynamical irreversibility~\footnote{Note that when we consider \emph{reversible Markov chains} as defined in Section~\ref{sect:Reversibility}, this corresponds to a different notion of reversibility than in the usual thermodynamical sense. Actually, even a ``reversible" Markov chain is  thermodynamically irreversible.} of ergodic random walks). On the other hand, this is not true for quantum walks, because in the absence of measurements a unitary evolution is deterministic (and in particular thermodynamically reversible): the contributions of the other eigenstates will not be damped but just oscillate with different frequencies, so that the overall evolution is quasi-periodic. As a consequence, while iterations of $P'$ always lead to a marked vertex, it may happen that iterations of the quantization of $P'$ will never lead to a state with a large overlap over marked vertices, unless the walk exhibits a strong symmetry (as is the case for a state-transitive walk with only one marked element, which could be addressed by previous approaches).

The main new idea of our approach is that, instead of using a quantization of $P$ or $P'$, we first modify the classical random walk, and then use a quantization of the modified walk. The original classical algorithm consists in applying $P'$ on the stationary distribution $\pi$ of $P$. While doing so, the system is brought far from equilibrium since $\pi$ is far (in statistical distance) from any stationary distribution of $P'$, which only have support on marked elements. The random walk will damp any non-stationary contribution of the initial distribution, but a quantum walk based on $P$ or $P'$, which is deterministic until a measurement, seems to have trouble with it. There is however a situation in Quantum Mechanics where contributions from other eigenstates will also cancel out, similarly to what happens for a random walk: if the system starts in a state close to the ground state of its instantaneous Hamiltonian $H(s)$ (\ie, close to equilibrium), and this Hamiltonian varies slowly, the Adiabatic Theorem ensures that the contributions from excited states will cancel out so that the system will remain close to its ground state. Therefore, our strategy is to first modify the classical algorithm so that the system stays close to equilibrium throughout the evolution, and then to translate it into an adiabatic quantum algorithm.

Consider the interpolated Markov chain $P(s)=(1-s)P+sP'$ (see Section~\ref{sect:Interpolation}). Our goal is to drive the stationary distribution $\pi$ of $P$ towards a stationary distribution of $P'$. Instead of immediately applying $P'$ on $\pi$, we could rather apply $P(s)$ while slowly switching $s$ from $0$ to $1$, so that the system remains at all time close to the stationary distribution $\pi(s)$ of $P(s)$. It can be shown that this leads to an algorithm with only a constant overhead with respect to $\RandomWalkAlgorithm$. Therefore, this new classical algorithm still runs in time $O(\HT(P,M))$, but the difference is that at all time the system is close to equilibrium, so that we are in a better shape for designing a quantum analogue based on the Adiabatic Theorem.

Using a Hamiltonian version of Szegedy's quantization technique, proposed by Somma and Ortiz~\cite{Somma}, we map $P(s)$ to a Hamiltonian $H(s)$ on the edge space $\Hi \otimes \Hi$, where $\Hi$ is the Hilbert space whose basis states are labeled by the vertices of the graph (see Section~\ref{sect:Hamiltonian}). The eigenstate of $H(s)$ with eigenvalue 0 then corresponds to the stationary distribution of $P$ for $s=0$, and to a distribution over marked vertices for $s=1$, so that this Hamiltonian may be used to solve the search problem by adiabatic evolution. The algorithm consists in preparing the first register in the state $\ket{\pi}$ corresponding to the stationary distribution $\pi$ of $P$ and applying the Hamiltonian $H(s)$ using a schedule $s(t)$ (we will specify $s(t)$ explicitly later). The resulting adiabatic evolution effectively implements a rotation on the first register at constant angular velocity from $\ket{\pi}$ to a superposition over marked vertices.
\algorithm{
$\AdiabaticAlgorithm$
\begin{enumerate}
 \item Prepare the state $\ket{\pi}\ket{0}$.
 \item Apply the time-dependent Hamiltonian $H(s)$ with schedule $s(t)$ from $t=0$ to $t=T$, where $T =  \frac{\pi}{2\varepsilon} \sqrt{\HT(P,M)}$.
 \item Measure the first register in the vertex basis.
\end{enumerate}
}
Under the assumption that the \emph{folk} adiabatic condition holds in our setup, we prove the following:
\begin{theorem}\label{thm:main}
For any ergodic and reversible Markov chain $P$ with a set of marked vertices $M$, the $\AdiabaticAlgorithm$ finds a marked vertex with probability at least $1-\varepsilon^2$ in a time $T = \frac{\pi}{2\varepsilon} \sqrt{\HT(P,M)}$, where $\HT(P,M)$ is the hitting time of the classical Markov chain $P$ with respect to the set of marked vertices $M$.
\end{theorem}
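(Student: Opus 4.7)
The plan is to analyze the algorithm by exploiting a two-dimensional invariant subspace that the Hamiltonian $H(s)$ should preserve throughout the evolution, and then apply the folk adiabatic condition to translate that geometric picture into a running time. First I would identify the ground state $\ket{\phi_0(s)}$ of $H(s)$ with eigenvalue $0$ at each $s \in [0,1]$: at $s=0$ it should coincide with the initial state $\ket{\pi}\ket{0}$ (the coherent encoding of the stationary distribution), and at $s=1$ it should be supported only on marked vertices in the first register. Using the Hamiltonian construction of Somma--Ortiz together with the spectrum computation promised in Section~\ref{sect:Spectrum}, I would verify that the two-dimensional space spanned by $\ket{\pi}\ket{0}$ and the final target state is invariant under all $H(s)$, so that the entire adiabatic trajectory lies in it.

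Second, I would parametrize the ground state inside this plane by the angle $\theta(s)$ between $\ket{\phi_0(s)}$ and $\ket{\pi}\ket{0}$. I expect $\theta$ to increase monotonically from $0$ to $\pi/2$ as $s$ goes from $0$ to $1$, since the initial and final ground states are orthogonal (the stationary distribution has no overlap with the marked set, conditionally on being unmarked). The natural schedule is then to choose $s(t)$ so that $\theta(s(t)) = \omega t$ is linear in $t$, which fixes $T$ once the angular velocity $\omega$ is chosen. The resulting adiabatic evolution is simply a rigid rotation in the two-dimensional subspace, so the overlap of the actual state with the ground state is controlled, by the folk adiabatic condition, by $\bigl| \braket{\phi_k(s)}{\partial_s \phi_0(s)} \bigr| \dot s / \Delta_k(s)$ for each excited state $\ket{\phi_k(s)}$. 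Requiring this to be at most $\varepsilon$ at every $s$ yields a failure probability at most $\varepsilon^2$ and turns the running time into a spectral integral.

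Third, I would turn the linear-angle schedule into an explicit formula $T = \frac{\pi}{2}/\omega$ where $\omega$ is set by the worst-case adiabatic constraint over the excited spectrum, and then identify the resulting expression with $\sqrt{\HT(P,M)}$. Here I would use the spectral decomposition of the discriminant matrix established in Section~\ref{sect:Preliminaries} to write the hitting time as a sum $\HT(P,M) = \sum_{\krange} |a_k|^2 / (1-\lambda_k)$ of contributions from the non-trivial eigenvalues, and match it term-by-term with the contributions of the excited states $\ket{\phi_k(s)}$ of $H(s)$ to $\|\partial_s \phi_0(s)\|^2 / \Delta_k(s)^2$. The gap $\Delta_k(s)$ should behave like $\sqrt{(1-s)(1-\lambda_k)}$ times appropriate coefficients, and this square root is precisely what produces the quadratic speed-up.

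The main obstacle will be the last step: showing that the geometric quantity coming from the adiabatic condition integrates to exactly the hitting time, not just an upper bound depending on the spectral gap of $P$ alone. This requires a careful spectral analysis of $H(s)$ that tracks not only the smallest gap but also the overlaps $\braket{\phi_k(s)}{\partial_s \phi_0(s)}$ for every excited state, since several modes can contribute comparably to the hitting time when there are multiple marked vertices or when $P$ lacks symmetry. Getting the constant $\pi/2$ and the $1/\varepsilon$ scaling exactly right, rather than just up to logarithmic factors, will also rely on the linearity of the angular schedule and on the fact that the dynamics remain in a two-dimensional subspace, so that no leakage-amplification argument is needed.
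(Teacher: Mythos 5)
Your high-level skeleton — interpolated walk, Hamiltonian from Somma--Ortiz, rotation of the zero-eigenvector in a two-dimensional subspace, linear angular schedule, folk adiabatic condition, spectral expression for the hitting time — does match the structure of the paper's proof, but there are several concrete errors and a genuine missing step.

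First, the claim that ``the two-dimensional space spanned by $\ket{\pi}\ket{0}$ and the final target state is invariant under all $H(s)$, so that the entire adiabatic trajectory lies in it'' is false, and if it were true the adiabatic theorem would be unnecessary. What is true is that the \emph{zero-eigenvector} $\ket{\Psi_n(s)}=\ket{v_n(s),0}$ rotates inside the plane $\mathcal{R}\otimes\ket{0}$ with $\mathcal{R}=\spn\{\ket{U},\ket{M}\}$, but $H(s)$ does not preserve $\mathcal{R}\otimes\ket{0}$: it sends $\ket{v_k(s),0}$ partly onto $\ket{v_k(s),0}^\perp$, which lies outside. The actual evolved state $\ket{\psi(t)}$ picks up components along the excited eigenvectors $\ket{\Psi_k^\pm(s)}$, which are not in $\mathcal{R}\otimes\ket{0}$. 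The adiabatic condition is precisely what controls this leakage; you cannot assume it away.

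Second, and more dangerous, you do not address the degeneracy of the zero eigenvalue. $H(s)$ has a huge zero-eigenspace $\mathcal{B}_n(s)\oplus\mathcal{B}^\perp(s)$ of dimension $(n-1)^2+1$. The folk adiabatic condition as usually stated assumes a non-degenerate tracked eigenstate, and naively there is no spectral gap at all. The paper has to show explicitly that $\bra{\Phi_j(s)}\frac{d}{dt}\ket{\Psi_n(s)}=0$ for all basis states of $\mathcal{B}^\perp(s)$ (no leakage into the irrelevant zero-sector) before the condition can be applied to the non-degenerate part of the spectrum. Without this argument the proof has a hole.

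Third, two smaller but real mistakes: the initial and final states are \emph{not} orthogonal — $\braket{\pi}{M}=\sqrt{p_M}\neq 0$, so $\theta$ runs from $\theta_0=\arcsin\sqrt{p_M}$ to $\pi/2$, not from $0$ to $\pi/2$ (this also fixes $\omega=\frac{1}{T}\arccos\sqrt{p_M}$ rather than $\frac{\pi}{2T}$); and the gap is $\sqrt{1-\lambda_k(s)^2}$, not $\sqrt{(1-s)(1-\lambda_k)}$ — the eigenvalues $\lambda_k(s)$ of the interpolated discriminant do not factorize that way.

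Finally, your ``match term by term'' step is where the real work hides and you leave it unresolved. After imposing the adiabatic condition one obtains $T\propto\max_s\sqrt{\sum_{\krange}\abs{\braket{v_k(s)}{v_n^\perp(s)}}^2/(1-\lambda_k(s))}$, and the hitting time is $\HT(P,M)=\sum_{\krange}\abs{\braket{v_k(1)}{U}}^2/(1-\lambda_k(1))$, but these are evaluated at different $s$ and with different vectors. The paper bridges them by introducing the extended hitting time $\HT(s)$ and proving the exact identity $\HT(s)=\HT(P,M)\sin^4\theta(s)$ via a differential equation for $\frac{d}{ds}\HT(s)$ that uses the explicit block structure of $\dot D(s)$. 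This identity is what makes the $\max$ land at $s=1$ and produces the clean constant $\pi/2$. It is the central technical lemma, and your sketch does not contain an argument in its place.
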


This theorem constitutes our main result and the body of this paper will be devoted to its proof.
While it relies on the \emph{folk} adiabatic condition, a similar statement can be made for a related quantum circuit algorithm, where no such condition is necessary, as shown in~\cite{kmor10}. Nevertheless, as explained in~\cite{kmor10}, the intuition behind the quantum circuit algorithm originates from the present adiabatic quantum algorithm.

\section{Preliminaries} \label{sect:Preliminaries}

\subsection{Classical interpolation} \label{sect:Interpolation}

Let us consider a Markov chain on a discrete state space $X$ of size $n$, and let $P$ be the $n \times n$ (row) stochastic matrix~\footnote{Throughout the paper we use the convention that each row of a stochastic matrix $P$ sums to one ($\forall x \in X: \sum_{y \in X} P_{xy} = 1$) and probability distributions are row vectors and hence are multiplied to the transition matrix from the left hand side (\eg, $\pi P$).} describing the transition probabilities of the Markov chain.
From now on, we will assume that the Markov chain is \emph{ergodic}, meaning that it is both \emph{irreducible} (any state can be reached from any other state by a finite number of steps) and \emph{aperiodic} (there is no integer $k > 1$ that divides the length of every cycle of the underlying directed graph of the stochastic matrix $P$). Assume that a subset $M\subset X$ of elements are marked and let $m$ be the size of $M$. Let $P'$ be the Markov chain obtained from $P$ by turning all outgoing transitions from marked elements into self-loops. We call $P'$ the \emph{absorbing} version of $P$. Note that $P'$ differs from $P$ only in the rows corresponding to the marked elements (where it contains all zeros on non-diagonal elements, and ones on the diagonal). If we arrange the elements of $X$ so that the marked elements are the last ones, matrices $P$ and $P'$ have the following block structure:
\begin{align}
  P &:= \mx{
    P_{UU} & P_{UM} \\
    P_{MU} & P_{MM}
  }, &
  P' &:= \mx{
    P_{UU} & P_{UM} \\
    0      & I
  },
  \label{eq:P}
\end{align}
where $P_{UU}$ and $P_{MM}$ are square matrices of size $(n-m) \times (n-m)$ and $m\times m$, respectively, while $P_{UM}$ and $P_{MU}$ are matrices of size $(n-m) \times m$ and $m \times (n-m)$, respectively. We call
\begin{equation}
  P(s) := (1-s) P + s P', \quad 0 \leq s \leq 1,
  \label{eq:P(s)}
\end{equation}
the \emph{classical interpolation} of $P$ and $P'$. Note that $P(0) = P$, $P(1) = P'$, and $P(s)$ has block structure
\begin{equation}
  P(s) = \mx{
    P_{UU}      & P_{UM} \\
    (1-s)P_{MU} & (1-s)P_{MM} + s I
  }.
  \label{eq:Block P(s)}
\end{equation}
Moreover, note that the ergodicity of $P$ implies that $P(s)$ is also ergodic, except for $s = 1$.

\subsection{Stationary distribution and reversibility} \label{sect:Reversibility}

By definition, since $P$ is ergodic, it has a unique and non-vanishing \emph{stationary distribution}, \ie, a probability distribution $\pi$ such that $\pi P = \pi$. An ergodic Markov chain $P$ is called \emph{reversible} if it satisfies the  so-called \emph{detailed balance condition}
\begin{equation}
  \forall x,y \in X: \pi_x P_{xy} = \pi_y P_{yx}.
  \label{eq:Detailed balance}
\end{equation}
This implies that for reversible Markov chains, the net flow of probability in the stationary distribution between every pair of states is zero.

From now on we will assume that $P$ is reversible. Let us argue that $P(s)$ is also reversible. Let $\pi := \rv{\pi_U}{\pi_M}$ be the stationary distribution of $P$, where $\pi_U$ and $\pi_M$ are row vectors of length $n-m$ and $m$, respectively. Let $p_M=\sum_{x\in M}\pi_x$ be the probability to pick a marked element from the stationary distribution and  $\pi(s)$ be the following distribution:
\begin{equation}
  \pi(s) := \frac{1}{1 - s (1-p_M)} \mx{(1-s) \pi_U & \pi_M}.
  \label{eq:pi(s)}
\end{equation}
One can check that $\pi(s)$ is a stationary distribution for $P(s)$ for any $s\in [0,1]$. Moreover, for $s \in \co{0}{1}$ $P(s)$ is ergodic and this is therefore the unique stationary distribution, while for $s=1$ any distribution which only has support on marked vertices is stationary.
Equation (\ref{eq:Detailed balance}) can be used to show that
\begin{equation}
  \forall s \in [0,1], \forall x,y \in X: \pi_x(s) P_{xy}(s) = \pi_y(s) P_{yx}(s),
  \label{eq:Extended detailed balance}
\end{equation}
which means that $P(s)$ is reversible for $s \in \co{0}{1}$. Condition (\ref{eq:Extended detailed balance}) is also satisfied at $s=1$, but $P(1)=P'$ is not ergodic, therefore {\em stricto sensu} $P(1)$ is not reversible.

\subsection{Discriminant matrix} \label{sect:Discriminant}

The \emph{discriminant matrix} of a Markov chain $P(s)$ is
\begin{equation}
  D(s) := \sqrt{P(s) \circ P(s)\tp},
  \label{eq:D(s)}
\end{equation}
where the Hadamard product ``$\circ$'' and the square root is computed entry-wise. We prefer to work with $D(s)$ rather than $P(s)$ since a Markov chain is not necessarily symmetric while its discriminant matrix is.

For a reversible Markov chain, the extended detailed balance condition (\ref{eq:Extended detailed balance}) implies that $D_{xy}(s) = \sqrt{\smash[b]{P_{xy}(s) P_{yx}(s)}} = P_{xy}(s) \sqrt{\pi_x(s) / \pi_y(s)}$ or equivalently
\begin{equation}
  D(s) = \diag \bigl( \! \sqrt{\pi(s)} \, \bigr) \: P(s) \: \diag \bigl( \! \sqrt{\pi(s)} \, \bigr)^{-1}.
  \label{eq:D(s) and P(s)}
\end{equation}
For $s \in \co{0}{1}$ the right-hand side is well-defined so that $D(s)$ and $P(s)$ are similar and therefore have the same eigenvalues. Moreover, the entry-wise square root of the stationary distribution $\sqrt{\pi(s)\tp}$ is the eigenvector of $D(s)$ with eigenvalue $1$.

At $s=1$ the right-hand side of equation (\ref{eq:D(s) and P(s)}) is not well-defined, but it can be shown that both claims still hold by expanding $P(s)$ according to equation (\ref{eq:Block P(s)}) and considering the limit $s \rightarrow 1$. Then equation (\ref{eq:D(s) and P(s)}) becomes
\begin{equation}
  D(1) = \mx{
    \diag \bigl( \! \sqrt{\pi_U} \, \bigr) \: P_{UU} \: \diag \bigl( \! \sqrt{\pi_U} \, \bigr)^{-1} & 0 \\
    0 & I
  }.
  \label{eq:D(1)}
\end{equation}
This implies that $D(1)$ is similar to $\tilde{P} := \smx{P_{UU}&0\\0&I}$, and in turn to $P(1) = \smx{P_{UU}&P_{UM}\\0&I}$ as well. To see that $\sqrt{\pi(1)\tp}$ is an eigenvector of $D(1)$ with eigenvalue $1$, note that $\pi(1) = \rv{0}{\pi_M} / p_M$, and $D(1)$ acts as the identity on marked elements (this follows from equations (\ref{eq:pi(s)}) and (\ref{eq:D(1)}), respectively).



\subsection{The quantum Hamiltonian} \label{sect:Hamiltonian}

Szegedy~\cite{Sze} proposed a general method to map a random walk to a unitary operator that defines a quantum walk. Recently Somma and Ortiz~\cite{Somma} showed how Szegedy's method may be adapted to build a Hamiltonian. We apply this method to the random walk $P(s)$.

The first step is to map the rows of $P(s)$ to quantum states. Let $\Hi$ be a Hilbert space of dimension $n=|X|$. For every $x \in X$ we define the following state in $\Hi$:
 \begin{equation}
  \ket{p_x(s)} := \sum_{y \in X} \sqrt{P_{xy}(s)} \ket{y}.
\end{equation}
Following Szegedy~\cite{Sze}, we then define a unitary operator $V(s)$ acting on $\Hi\otimes\Hi$ as
\begin{equation}
  V(s) \ket{x, 0} := \ket{x} \ket{p_x(s)} = \sum_{y \in X} \sqrt{P_{xy}(s)} \ket{x,y},
  \label{eq:V(s)}
\end{equation}
when the second register is in some reference state $\ket{0}\in\Hi$, and arbitrarily otherwise.

Let $S$ be the gate that swaps both registers.
When restricted to $\ket{0}$ in the second register, the operator
$V\ct(s) S V(s)$ acts as $D(s)$:
\begin{equation}
  \bra{x, 0} V\ct(s) SV(s) \ket{y, 0}
  = \braket{x}{p_y(s)} \braket{p_x(s)}{y}
  = \sqrt{P_{xy}(s) P_{yx}(s)}
  = D_{xy}(s).
  \label{eq:VW-discriminant}
\end{equation}

Following~\cite{Somma}, we now define the Hamiltonian $H(s)$ on $\Hi\otimes\Hi$ as
\begin{equation}
  H(s) := i \bigl[ V\ct(s) SV(s), \Pi_0 \bigr],
\end{equation}
where $\Pi_0 := I \otimes \ket{0} \bra{0}$ is the projector that keeps only the component containing the reference state $\ket{0}$ in the second register and $[A,B]:=AB-BA$ is the commutator.

\section{Spectral decomposition of $H(s)$} \label{sect:Spectrum}

To understand the properties of the Hamiltonian $H(s)$, let us find its spectral decomposition. We will relate its spectrum to that of $D(s)$.

\subsection{Diagonalization of $D(s)$} \label{sect:D(s)}

Recall from equation (\ref{eq:D(s)}) that $D(s)$ is real and symmetric. Therefore, its eigenvalues are real and its eigenvectors form an orthonormal basis of $\Hi$ with real amplitudes. Let
\begin{equation}
  D(s) = \sum_{i=1}^n \lambda_i(s) \ket{v_i(s)} \bra{v_i(s)}
  \label{eq:spectral}
\end{equation}
be the spectral decomposition of $D(s)$. We can make the eigenvalues of $P(s)$ and hence also of $D(s)$ to be non-negative by replacing $P$ with $(P+I)/2$. Note that this will only modify the hitting time of the Markov chain by a factor of $2$. Hence, from now on without loss of generality we assume that all eigenvalues of $D(s)$ are non-negative. In addition, we can arrange them so that
\begin{equation}
  0 \leq \lambda_1(s) \leq \lambda_2(s) \leq \dots \leq \lambda_n(s).
\end{equation}

From the Perron--Frobenius theorem we have that $\forall i: \lambda_i(s) \leq 1$ and $\lambda_n(s) = 1$. In addition, for any $s \in \co{0}{1}$ the Markov chain $P(s)$ is ergodic and $\forall i \neq n: \lambda_i(s) < 1$. On the other hand, at $s = 1$ the Markov chain is not ergodic and has eigenvalue $1$ with multiplicity $m$. We may summarize this as follows:
\begin{align}
  \lambda_{n-1}(s) < \lambda_n(s) = 1&, \quad \forall s \in \co{0}{1}, \\
  \lambda_{n-m}(1) < \lambda_{n-m+1}(1) = \dots = \lambda_n(1) = 1&.
\end{align}

Recall from Section~\ref{sect:Discriminant} that $\sqrt{\pi(s)\tp}$ is an eigenvector of $D(s)$ with eigenvalue $1$ for any $s \in [0,1]$, so let us choose $\ket{v_n(s)}$ in equation (\ref{eq:spectral}) as
\begin{equation}\label{eq:v_n-pi}
  \ket{v_n(s)} := \sqrt{\pi(s)\tp}.
\end{equation}

\subsection{Diagonalization of $H(s)$}


Now, let us express the eigenvalues and eigenvectors of $H(s)$ in terms of those of $D(s)$. First, let us break up the Hilbert space $\Hi\otimes\Hi$ into mutually orthogonal subspaces that are invariant under $H(s)$. Let
 \begin{itemize}
  \item $\mathcal{B}_k(s) := \spn\set{\ket{v_k(s), 0}, V\ct(s) SV(s) \ket{v_k(s), 0}}$, $\forall k \neq n$,
  \item $\mathcal{B}_n(s) := \spn\set{\ket{v_n(s), 0}}$,
  \item $\mathcal{B}^\perp(s) := (\oplus_{k=1}^n \mathcal{B}_k(s))^\perp$.
 \end{itemize}
Note that $\bra{v_k(s),0} V\ct(s) SV(s) \ket{v_k(s),0} = \bra{v_k(s)} D(s) \ket{v_k(s)} = \lambda_k(s)$ by equations (\ref{eq:VW-discriminant}) and (\ref{eq:spectral}).
Recall that for $s<1$, we have $\lambda_k(s) \neq 1$ for any $k \neq n$ and thus
\begin{equation}
  V\ct(s) SV(s) \ket{v_k(s), 0}
    = \lambda_k(s) \ket{v_k(s), 0} + \sqrt{1-\lambda_k(s)^2} \ket{v_k(s), 0}^\perp
\end{equation}
for some~\footnote{Note that $\ket{v_k(s), 0}^\perp$ depends on how the operator $V(s)$ defined in equation in (\ref{eq:V(s)}) is extended to the whole Hilbert space.} unit vector $\ket{v_k(s), 0}^\perp$ orthogonal to $\ket{v_k(s), 0}$ and lying in the subspace $\mathcal{B}_k(s)$. We also define by continuity $\ket{v_k(1), 0}^\perp := \lim_{s\to 1}\ket{v_k(s), 0}^\perp$.


Following Somma and Ortiz, who were in turn relying on Szegedy's work, we may now characterize the spectrum of $H(s)$.

\begin{lemma}[\cite{Sze,Somma}]
$H(s)$ accepts the following eigenvalues and eigenstates.
\begin{itemize}
 \item On $\mathcal{B}_k(s)$, $\forall k \neq n$:
\begin{equation}
  E^\pm_k(s) := \pm \sqrt{1-\lambda_k(s)^2}, \quad
  \ket{\Psi^\pm_k(s)} := \frac{\ket{v_k(s), 0} \pm \ket{v_k(s), 0}^\perp}{\sqrt{2}},
  \label{eq:Ek(s)}
\end{equation}
\item On $\mathcal{B}_n(s)$:
\begin{equation}
  E_n(s) := 0, \quad
  \ket{\Psi_n(s)} := \ket{v_n(s), 0}.
  \label{eq:En(s)}
\end{equation}
\item On $\mathcal{B}^\perp(s)$:
\begin{equation}
  F_j(s) := 0, \quad
  \ket{\Phi_j(s)}, \quad
  j \in \set{1, \dotsc, (n-1)^2},
  \label{eq:Fj(s)}
\end{equation}
where $\set{\ket{\Phi_j(s)}}$ defines an arbitrary basis of $\mathcal{B}^\perp(s)$.
\end{itemize}
\end{lemma}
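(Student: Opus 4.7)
The engine driving the argument is the pair of algebraic facts that (i) $W(s) := V\ct(s)\, S\, V(s)$ is a Hermitian involution, since $S\ct = S = S^{-1}$ and $V(s)$ is unitary, and (ii) under the identification $\Hi \otimes \ket{0} \simeq \Hi$, the compressed operator $\Pi_0 W(s) \Pi_0$ coincides with $D(s)$, by equation~(\ref{eq:VW-discriminant}). My plan is to use these to exhibit $\mathcal{B}_n(s) \oplus \bigl( \oplus_{k \neq n}\mathcal{B}_k(s)\bigr) \oplus \mathcal{B}^\perp(s)$ as an invariant decomposition for $H(s)$, and then diagonalize $H(s)$ block by block.

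First I would verify that these subspaces are pairwise orthogonal and invariant under both $W(s)$ and $\Pi_0$. Orthogonality reduces to the two identities $\bra{v_k,0} W \ket{v_l,0} = \lambda_k\,\delta_{kl}$ (a rewriting of (\ref{eq:VW-discriminant})) and $\bra{v_k,0} W^2 \ket{v_l,0} = \delta_{kl}$ (from $W^2 = I$). Invariance of each $\mathcal{B}_k(s)$ under $W(s)$ is then immediate from $W^2=I$, and invariance of $\mathcal{B}^\perp(s)$ follows by self-adjointness; invariance under $\Pi_0$ uses $\Pi_0 W \ket{v_k,0} = D\ket{v_k,0} = \lambda_k \ket{v_k,0}$, again via~(ii).

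Next I would analyze each block. On $\mathcal{B}^\perp(s)$ any vector is orthogonal to every $\ket{v_k,0}$ and hence annihilated by $\Pi_0$; combined with the $W$-invariance of $\mathcal{B}^\perp(s)$, both terms of the commutator vanish and $H(s)$ acts as $0$. On $\mathcal{B}_n(s)$, the condition $\lambda_n=1$ together with the involution property forces $W\ket{v_n,0} = \ket{v_n,0}$, and $\Pi_0 \ket{v_n,0} = \ket{v_n,0}$, so again the commutator vanishes. On $\mathcal{B}_k(s)$ for $k \neq n$ with $\lambda_k<1$, I would compute the $2\times 2$ matrices of $W(s)$ and $\Pi_0$ in the basis $\bigl(\ket{v_k,0},\ket{v_k,0}^\perp\bigr)$: applying $W^2=I$ to the defining relation of $\ket{v_k,0}^\perp$ fixes $W\ket{v_k,0}^\perp = \sqrt{1-\lambda_k^2}\ket{v_k,0} - \lambda_k\ket{v_k,0}^\perp$, while $\Pi_0 W \Pi_0 = D$ forces $\Pi_0 \ket{v_k,0}^\perp = 0$. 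Evaluating the commutator then reduces $H(s)$ on $\mathcal{B}_k(s)$ to $\sqrt{1-\lambda_k^2}$ times a Pauli-type off-diagonal operator, giving the claimed eigenvalues $\pm\sqrt{1-\lambda_k^2}$; the two eigenvectors $\ket{\Psi_k^\pm(s)}$ then follow up to a phase convention for $\ket{v_k,0}^\perp$.

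The dimension count $2(n-1) + 1 + (n-1)^2 = n^2$ confirms that no eigenstates have been missed. The main nuisance I anticipate is the behavior at $s=1$: for $k \in \set{n-m+1,\dotsc,n-1}$ one has $\lambda_k(1)=1$, so the corresponding $\mathcal{B}_k(s)$ degenerate to one-dimensional subspaces in the limit. The vectors $\ket{v_k(s),0}^\perp$ must be defined by continuity as $s\to 1$ so that the decomposition and the formulas $E^\pm_k(1) = 0$ remain compatible with absorption of these degenerate blocks into the zero-eigenvalue subspace on $\mathcal{B}^\perp(1)$.
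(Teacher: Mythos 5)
Your proof is correct and follows essentially the same route as the paper: both decompose $\Hi\otimes\Hi$ into the invariant blocks $\mathcal{B}_k(s)$, $\mathcal{B}_n(s)$, $\mathcal{B}^\perp(s)$ and show $H(s)$ acts as $\sqrt{1-\lambda_k^2(s)}\,\sigma_y$ on each two-dimensional block, the only cosmetic difference being that you derive $W\ket{v_k(s),0}^\perp$ from the involution $W^2=I$ while the paper gets $H(s)\ket{v_k(s),0}^\perp$ from Hermiticity and tracelessness. One small imprecision: at $s=1$ the paper keeps the degenerate $\mathcal{B}_k(1)$ (for $\lambda_k(1)=1$) formally two-dimensional by defining $\ket{v_k(1),0}^\perp$ as the limit, rather than absorbing them into $\mathcal{B}^\perp(1)$ --- which is necessary to preserve the stated count $\dim\mathcal{B}^\perp=(n-1)^2$.
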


\begin{proof}
We consider the case $s\in\co{0}{1}$, the case $s=1$ follows by continuity.
Since  $V\ct(s) SV(s)\ket{v_n(s), 0}=D(s)\ket{v_n(s)}\ket{0}=\ket{v_n(s), 0}$, we immediately have that $\ket{v_n(s), 0}$ is an eigenstate of $H(s)$ with eigenvalue $0$. 
For $k\neq n$, note that
\begin{align}
  V\ct(s) SV(s) \Pi_0 \ket{v_k(s), 0} &= \lambda_k(s) \ket{v_k(s), 0} + \sqrt{1-\lambda_k(s)^2} \ket{v_k(s), 0}^\perp, \\
  \Pi_0 V\ct(s) SV(s) \ket{v_k(s), 0} &= \lambda_k(s) \ket{v_k(s), 0}.
\end{align}
By combining these expressions we get
\begin{align}
  &H(s) \ket{v_k(s), 0}       =  i \sqrt{1-\lambda_k(s)^2} \ket{v_k(s), 0}^\perp, \\
  &H(s) \ket{v_k(s), 0}^\perp = -i \sqrt{1-\lambda_k(s)^2} \ket{v_k(s), 0},
\end{align}
where the second line follows from the fact that $H(s)$ is Hermitian and traceless. In other words, $H(s)$ acts on subspace $\mathcal{B}_k(s)$ as
$
  \sqrt{1-\lambda_k(s)^2} \sigma_y,
$
where $\sigma_y := \smx{0&-i\\i&0}$ is the Pauli $y$ matrix, which yields equation~(\ref{eq:Ek(s)}).

Since $\mathcal{B}^\perp(s)$ is the orthogonal complement of the union of invariant subspaces, it is also an invariant subspace for $H(s)$. Note that $H(s)$ restricted to this subspace is equal to zero, hence the remaining $n^2 - (2n - 1) = (n-1)^2$ eigenvalues of $H(s)$ are zero.
\end{proof}

\section{The quantum adiabatic theorem} \label{sect:Adiabatic theorem}

In adiabatic quantum computing it is a common practice to associate the intermediate state of the computation with the ground state (\ie, the lowest energy eigenstate) of the Hamiltonian. However, in our case the spectrum of $H(s)$ is symmetric about zero and the state that we are interested in lies in the middle of the spectrum. Hence, we will not use the ground state of $H(s)$, which has negative energy, but instead we will use the zero-eigenvector $\ket{\Psi_n(s)}$ given in equation (\ref{eq:En(s)}). Indeed, equation~(\ref{eq:v_n-pi}) shows that this state is closely related to the stationary distribution $\pi(s)$ of $P(s)$. In particular, the problem would be solved if we can reach the state $\ket{\Psi_n(1)}$, as measuring the first register of this state yields a vertex distributed according to $\pi(1)$, which only has support on marked vertices.

\subsection{The adiabatic condition}

We initially prepare the system in the zero-eigenvector $\ket{\Psi_n(0)}$ of $H(0)$ and then start to change the Hamiltonian $H(s)$ by slowly increasing the parameter $s$ from $0$ to $1$ according to some \emph{schedule} $s(t)$. If the schedule $s(t)$ is chosen so that it satisfies certain conditions, the system is guaranteed to stay close to the intermediate zero-eigenstate $\ket{\Psi_n(s)}$. 
Then, at $s=1$, the state will be close to $\ket{\Psi_n(1)}=\ket{v_n(1),0}$, where the first register only has overlap over marked vertices, so that a measurement yields a marked vertex with high probability.
Note that in our case the zero-eigenspace $\mathcal{B}_n(s) \cup \mathcal{B}^\perp(s)$ of the Hamiltonian $H(s)$ has a huge dimension, so we have to make sure that the non-relevant part $\mathcal{B}^\perp(s)$ is totally decoupled from $\ket{\Psi_n(s)}$ (the only zero-eigenvector that is relevant for our algorithm) before we apply the adiabatic condition. In particular, we want to show that
\begin{equation}
  \bra{\Phi_j(s)} \cdot \frac{d}{dt} \ket{\Psi_n(s)} = 0
  \label{eq:No leaking}
\end{equation}
for any $j \in \set{1, \dotsc, (n-1)^2}$, since this would imply that during the evolution $\ket{\Psi_n(s)}$ is not leaked into the subspace $\mathcal{B}^\perp(s)$ spanned by states $\ket{\Phi_j(s)}$. To see that this is indeed the case, note that
\begin{equation}
  \ket{\Phi_j(s)} \perp
  \spn \set{\ket{\Psi^\pm_1(s)}, \dotsc, \ket{\Psi^\pm_{n-1}(s)}, \ket{\Psi_n(s)}}
\end{equation}
for any $j \in \set{1, \dotsc, (n-1)^2}$, since the eigenvectors of $H(s)$ form an orthonormal basis. In particular,
\begin{align}
  \ket{\Phi_j(s)}
  &  \perp \spn
     \set{
       \ket{\Psi^{+}_1(s)} + \ket{\Psi^{-}_1(s)}, \dotsc,
       \ket{\Psi^{+}_{n-1}(s)} + \ket{\Psi^{-}_{n-1}(s)},
       \ket{\Psi_n(s)}
     } \nonumber \\
  &= \spn \set{\ket{v_1(s), 0}, \dotsc, \ket{v_{n-1}(s), 0}, \ket{v_n(s), 0}}.
\end{align}
Recall from equation (\ref{eq:En(s)}) that $\frac{d}{dt} \ket{\Psi_n(s)} = \frac{d}{dt} \ket{v_n(s)} \ket{0}$. Hence, the inner product in equation (\ref{eq:No leaking}) indeed vanishes. Thus, we can safely apply the adiabatic condition only for the relevant subspace in which the zero-eigenstate is not degenerate.

The \emph{folk} version of the Adiabatic Theorem~\cite{Mes} states that during the evolution the state of the system $\ket{\psi(t)}$ is guaranteed to stay close to the intermediate zero-eigenstate $\ket{\Psi_n(s)}$, more precisely,
\begin{equation}\label{eq:error}
  \forall t: \abs{\big\langle \Psi_n(s(t)) \big| \psi(t) \big\rangle}^2 \geq 1-\varepsilon^2,
\end{equation}
as long as the \emph{adiabatic condition}
\begin{equation}
  \forall t: \sum_{\sigma = \pm 1} \sum_{k = 1}^{n-1}
    \frac{\abs{\bra{\Psi_k^\sigma(s)} \cdot \frac{d}{dt} \ket{\Psi_n(s)}}^2}{\bigl( E^\sigma_k(s) - E_n(s) \bigr)^2}
    \leq \varepsilon^2
  \label{eq:Adiabatic condition}
\end{equation}
is satisfied.
While this condition is known not to be sufficient in full generality~(see \eg\ the discussion in \cite{jrs07}), we will assume that it can be applied in our setup. We will discuss about how this assumption may be suppressed in Section~\ref{sect:conclusion}.

If we insert all eigenvalues and eigenvectors from equations (\ref{eq:Ek(s)}) and (\ref{eq:En(s)}), the adiabatic condition (\ref{eq:Adiabatic condition}) can be written purely in terms of the eigenvalues and eigenvectors of the discriminant matrix $D(s)$:
\begin{equation}
  \forall t: \sum_\krange
    \frac{\abs{\bra{v_k(s)} \cdot \frac{d}{dt} \ket{v_n(s)}}^2}{1 - \lambda_k^2(s)}
    \leq \varepsilon^2.
  \label{eq:Adiabatic condition 2}
\end{equation}

\subsection{Rotation in a two-dimensional subspace} \label{sect:Rotation}

In this section we will provide a simple interpretation of the evolution of the eigenvector $\ket{v_n(s)}$. First, let us define the following superpositions over all elements, marked elements, and unmarked elements, respectively:
\begin{align}
 \ket{\pi}&:=\sum_{x\in X} \sqrt{\pi_x} \ket{x},&
 \ket{M}&:=\frac{1}{\sqrt{p_M}}\sum_{x\in M} \sqrt{\pi_x} \ket{x},&
 \ket{U}&:=\frac{1}{\sqrt{1-p_M}}\sum_{x\notin M} \sqrt{\pi_x} \ket{x}.
\end{align}
Now, we show that $\ket{v_n(s)}$ is subject to a rotation in the two-dimensional subspace $\Hr := \spn \set{\ket{U}, \ket{M}}$.
From equations~(\ref{eq:pi(s)}) and~(\ref{eq:v_n-pi}), we obtain
\begin{equation}
  \ket{v_n(s)}
  = \sum_{x\in X}\sqrt{\pi_x(s)}\ket{x}
  = \frac{1}{\sqrt{1-s(1-p_M)}} \Biggl(
      \sqrt{1-s}\sum_{x\notin M}\sqrt{\pi_x}\ket{x}+\sum_{x\in M}\sqrt{\pi_x}\ket{x}
    \Biggr).
\end{equation}
Using the definition of $\ket{U}$ and $\ket{M}$ we can rewrite this simply as
\begin{equation}
  \ket{v_n(s)} = \cos \theta(s) \ket{U} + \sin \theta(s) \ket{M},
  \label{eq:vn(s)}
\end{equation}
where
\begin{equation}
  \theta(s) := \arcsin \sqrt{\frac{p_M}{1-s(1-p_M)}}.
  \label{eq:theta(s)}
\end{equation}

Let us choose the schedule $s(t)$ so that the evolution of $\ket{v_n(s)}$ as defined by equation (\ref{eq:vn(s)}) corresponds to a rotation with constant angular velocity in the subspace $\Hr$. If $T$ is the length of the evolution and $s: [0,T] \to [0,1]$ is defined as
\begin{align}
  s(t) &:= \frac{1}{1-p_M} \left( 1-\frac{p_M}{\sin^2(\omega t + \theta_0)} \right), &
  \theta_0 &:= \arcsin \sqrt{p_M}, &
  \omega &:= \frac{1}{T} \arccos \sqrt{p_M},
  \label{eq:Parameters}
\end{align}
then
\begin{equation}
  \theta(s(t)) = \omega t + \theta_0.
  \label{eq:rotation}
\end{equation}

Let us choose a vector $\ket{v^\perp_n(s)}$ such that $\set{\ket{v(s)}, \ket{v^\perp_n(s)}}$ is an orthonormal basis of $\Hr$ for every $s$:
\begin{equation}
  \ket{v^\perp_n(s)} :=  
    -\sin \theta(s) \ket{U} +
    \cos \theta(s) \ket{M}.
  \label{eq:vnperp(s)}
\end{equation}
Then from equations (\ref{eq:vn(s)}) and  (\ref{eq:rotation}) we easily find that
\begin{equation}
  \frac{d}{dt} \ket{v_n(s(t))}
  = \omega \ket{v^\perp_n(s(t))}
  = \frac{1}{T} \arccos \sqrt{p_M} \ket{v^\perp_n(s(t))}.
  \label{eq:Derivative of vn(s)}
\end{equation}
Note that $\arccos \sqrt{p_M} \leq \frac{\pi}{2}$. Therefore, we can rewrite the adiabatic condition (\ref{eq:Adiabatic condition 2}) as follows:
\begin{equation}
  \forall s: \frac{\pi^2}{4\varepsilon^2} \sum_\krange
    \frac{\abs{\braket{v_k(s)}{v^\perp_n(s)}}^2}{1 - \lambda_k^2(s)}
    \leq T^2.
  \label{eq:Adiabatic condition 3}
\end{equation}
If this condition is satisfied, equation~(\ref{eq:error}) implies that we obtain at time $t=T$ a state $\ket{\psi(T)}$ close to $\ket{\Psi_n(1)}=\ket{v_n(1)}\ket{0}=\ket{M}\ket{0}$, so that measuring the first register yields a marked vertex with probability at least $1-\epsilon^2$.

\section{Running time of the quantum algorithm} \label{sect:Running time}

\subsection{Choice of running time $T$}

We have to change the parameter $s$ slowly for the evolution to be adiabatic, thus we want to choose $T$ big enough so that condition (\ref{eq:Adiabatic condition 3}) holds. Recall from Section~\ref{sect:D(s)} that we can assume that $\lambda_k(s) \geq 0$. Thus, $1 - \lambda^2_k(s) = \bigl(1 + \lambda_k(s)\bigr) \bigl(1 - \lambda_k(s)\bigr) \geq 1 - \lambda_k(s)$. Let us impose a slightly stronger condition on $T$ in equation (\ref{eq:Adiabatic condition 3}) by replacing $1 - \lambda^2_k(s)$ with $1 - \lambda_k(s)$. In addition, let us choose the smallest $T$ that still satisfies the inequality and use it as the running time of our adiabatic algorithm:\note{Can we\\interpret\\this directly\\without\\introducing\\$\HT(s)$ and\\$\MT(s)$?}
\begin{equation}
  T := \frac{\pi}{2\varepsilon} \max_{0 \leq s \leq 1}
    \sqrt{\sum_\krange \frac{\abs{\braket{v_k(s)}{v^\perp_n(s)}}^2}{1 - \lambda_k(s)}}.
  \label{eq:T}
\end{equation}

It turns out that there is an interesting relationship between this quantity and the hitting time of the Markov chain $P$.

\subsection{Hitting time of a Markov chain}


Let us first give an explicit expression for the hitting time $\HT(P,M)$ as defined in Section~\ref{sect:main-result}. Let $0_M$ and $1_M$ (resp., $0_U$ and $1_U$) be the all-zero and all-one row vectors of dimension $m$ (resp., $n-m$).
Furthermore, let $\tilde{\pi}_M:=\pi_M/p_M$ and $\tilde{\pi}_U:=\pi_U/(1-p_M)$ be the row vectors describing distributions over marked and unmarked vertices.
Then, the distribution of vertices at the the first execution of step~\ref{item:walk} of $\RandomWalkAlgorithm$ is $\rv{\tilde{\pi}_U}{0_M}$, and from the definition of the discriminant $D(1)$, we have
\begin{eqnarray}
  \HT(P,M)
  &:=& \sum_{t=0}^\infty \Pr[\textrm{No marked vertex found after $t$ steps from $\rv{\tilde{\pi}_U}{0_M}$}] \\
  & =& \sum_{t=0}^\infty \rv{\tilde{\pi}_U}{0_M} \, P^t(1) \, \rv{1_U}{0_M}\tp \\
  & =& \sum_{t=0}^\infty \bra{U} D^t(1) \ket{U},
\end{eqnarray}
where the last equality follows from equation (\ref{eq:D(1)}). We will show that the running time of our adiabatic quantum algorithm is directly related to the square root of the hitting time $\HT(P,M)$. In order to do this, we define the following extension of the hitting time. Let
\begin{equation}
  \HT(s) := \sum_{t=0}^\infty \bra{U} \bigl( D^t(s)-\proj{v_n(s)} \bigr) \ket{U}.
\end{equation}
Note that $\HT(1) = \HT(P,M)$ since $\braket{U}{v_n(1)} = 0$. This justifies to consider $\HT(s)$ as an extension of the hitting time. Intuitively, $\HT(s)$ may be understood as the time it takes for $P(s)$ to converge to its stationary distribution, starting from $\rv{\tilde{\pi}_U}{0_M}$. For $s=1$, the walk $P(1)=P'$ converges to the (non-unique) stationary distribution $\rv{0_U}{\tilde{\pi}_M}$, which only has support over marked elements.

Using the expansion $(1-x)^{-1} = \sum_{t=0}^\infty x^t$ and the spectral decomposition~(\ref{eq:spectral}) of the discriminant $D(s)$, we have
\begin{eqnarray}
  \HT(s)
  &=& \sum_{k \neq n} \sum_{t=0}^\infty \lambda_k^t(s) \braket{U}{v_k(s)} \braket{v_k(s)}{U} \\
  &=& \sum_\krange \frac{\abs{\braket{v_k(s)}{U}}^2}{1 - \lambda_k(s)}
  \label{eq:HT(s)}.
\end{eqnarray}

\subsection{Relation between the running time and the extended hitting times}

Let us express the running time $T$ from equation (\ref{eq:T}) in terms of $\HT(s)$. Define
\begin{equation}
  A(s) := \sum_\krange \frac{\ket{v_k(s)}\bra{v_k(s)}}{1 - \lambda_k(s)}.
  \label{eq:A}
\end{equation}
Note that both $T$ and $\HT(s)$ can be expressed in terms of $A(s)$ as follows:
\begin{align}
  T &= \frac{\pi}{2\varepsilon} \max_{0 \leq s \leq 1} \sqrt{\bra{v^\perp_n(s)} A(s) \ket{v^\perp_n(s)}}, &
  \HT(s) &= \bra{U} A(s) \ket{U}.
  \label{eq:T and HT(s)}
\end{align}
By definition, we have $A(s)\ket{v_n(s)}=0$, which together with equation~(\ref{eq:vn(s)}) implies that
\begin{equation}
  A(s)\ket{M}=-\frac{\cos\theta(s)}{\sin\theta(s)}A(s)\ket{U}.
  \label{eq:AM-AU}
\end{equation}
Using this and the definition of $\ket{v^\perp_n(s)}$ in equation~(\ref{eq:vnperp(s)}), we see that $\bra{v^\perp_n(s)} A(s) \ket{v^\perp_n(s)} = \bra{U} A(s) \ket{U} / \sin^2\theta(s)$. Thus we get the following relationship between $T$ and $\HT(s)$:
\begin{equation}
  T = \frac{\pi}{2\varepsilon} \max_{0 \leq s \leq 1} \frac{\sqrt{\HT(s)}}{\sin\theta(s)}.
  \label{eq:T in terms of HT(s)}
\end{equation}

To relate $T$ and the usual hitting time $\HT(P,M)$ of $P$, we first provide an explicit expression for $\HT(s)$ in terms of $\HT(P,M)$ (the proof is given in the appendix).
\begin{lemma}\label{lem:HT}
$\HT(s) = \HT(P,M) \, \sin^4 \theta(s)$.
\end{lemma}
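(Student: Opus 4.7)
The plan is to pull out the $s$-dependence as a factor $\sin^4\theta(s)$ via a Schur-complement calculation, then fix the constant by evaluating at $s=1$. Equation~(\ref{eq:AM-AU}) combined with the definition~(\ref{eq:vnperp(s)}) of $\ket{v^\perp_n(s)}$ already yields $\bra{U}A(s)\ket{U} = \sin^2\theta(s)\,\bra{v^\perp_n(s)}A(s)\ket{v^\perp_n(s)}$, so it suffices to prove $\bra{v^\perp_n(s)}A(s)\ket{v^\perp_n(s)} = \sin^2\theta(s)\,\HT(P,M)$. For $s\in\co{0}{1}$, let $\ket{z(s)} := A(s)\ket{v^\perp_n(s)}$, which is the unique vector in $\ket{v_n(s)}^\perp$ satisfying $(I-D(s))\ket{z(s)} = \ket{v^\perp_n(s)}$; the case $s=1$ will follow by continuity.

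From equation~(\ref{eq:D(s) and P(s)}) one reads off the block structure $D(s) = \smx{D_{UU} & \sqrt{1-s}\,D_{UM}\\\sqrt{1-s}\,D_{MU} & (1-s)D_{MM} + sI_M}$ with $D := D(0)$. Decomposing $\ket{z(s)} = (\ket{z_U},\ket{z_M})$ in the $(U,M)$-block basis, and using $\ket{v^\perp_n(s)} = (-\sin\theta(s)\,\ket{u},\cos\theta(s)\,\ket{\mu})$ with $\ket{u}:=\sqrt{\pi_U\tp}/\sqrt{1-p_M}$ and $\ket{\mu}:=\sqrt{\pi_M\tp}/\sqrt{p_M}$, the marked-block equation expresses $\ket{z_M}$ in terms of $\ket{z_U}$; here $(I-D_{MM})$ is invertible because $D_{MM}$ is similar to the substochastic $P_{MM}$ whose spectral radius is $<1$ by the irreducibility of $P$. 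Substituting into the unmarked-block equation yields the reduced system $S\,\ket{z_U} = \sin\theta(s)\,\ket{r}$, where the Schur complement $S := (I-D_{UU}) - D_{UM}(I-D_{MM})^{-1}D_{MU}$ and the vector $\ket{r}$ are \emph{independent of}~$s$. Applying $D\sqrt{\pi\tp} = \sqrt{\pi\tp}$ in block form yields the detailed balance identities
\[
(I-D_{UU})\ket{u} = \sqrt{\tfrac{p_M}{1-p_M}}\,D_{UM}\ket{\mu},\qquad (I-D_{MM})\ket{\mu} = \sqrt{\tfrac{1-p_M}{p_M}}\,D_{MU}\ket{u},
\]
from which one verifies $S\ket{u} = 0$ and $\braket{u}{r} = 0$. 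The reduced equation is therefore consistent, and one may write $\ket{z_U} = \alpha(s)\ket{u} + \sin\theta(s)\,S^+\ket{r}$, where the coefficient $\alpha(s)$ is fixed by the constraint $\ket{z(s)}\perp\ket{v_n(s)}$.

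Reconstructing $\ket{z_M}$ and computing $\bra{v^\perp_n(s)}\ket{z(s)}$, the various $\sin\theta(s)$, $\cos\theta(s)$, and $\sqrt{1-s}$ factors collapse using the identity $\sqrt{1-s}\,\tan\theta(s) = \sqrt{p_M/(1-p_M)}$ (immediate from~(\ref{eq:theta(s)})) to produce the clean factorization $\bra{v^\perp_n(s)}A(s)\ket{v^\perp_n(s)} = \sin^2\theta(s)\cdot C$, with $C$ an $s$-independent scalar. Setting $s=1$---where $\theta(1)=\pi/2$, so $\ket{v^\perp_n(1)} = -\ket{U}$ and $\sin^2\theta(1)=1$---identifies $C = \bra{U}A(1)\ket{U} = \HT(1) = \HT(P,M)$. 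Combining with the reduction in the first paragraph gives $\HT(s) = \sin^4\theta(s)\,\HT(P,M)$. The principal obstacle is the algebraic bookkeeping in the last step: scalar contractions such as $\bra{u}D_{UM}(I-D_{MM})^{-1}\ket{\mu}$ and the cross term $\bra{\mu}(I-D_{MM})^{-1}D_{MU}S^+\ket{r}$ must be combined carefully, and it is only by repeated use of the detailed balance identities that all contributions not proportional to $\sin^2\theta(s)$ cancel, yielding the neat factorization on which the lemma rests.
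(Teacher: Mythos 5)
Your proposal is correct, and it takes a genuinely different route from the paper. The paper proves the lemma indirectly: it first establishes a derivative formula for the discriminant (Lemma~\ref{lem:d-dot}), uses it to show that $\HT(s)$ satisfies the first-order ODE $\frac{d}{ds}\HT(s) = 4\dot\theta(s)\frac{\cos\theta(s)}{\sin\theta(s)}\HT(s)$ (Lemma~\ref{lemma:derivative-HT}), and then integrates with the boundary condition $\HT(1) = \HT(P,M)$. Your argument skips the calculus entirely and computes $\ket{z(s)} = A(s)\ket{v^\perp_n(s)}$ directly by block-solving $(I-D(s))\ket{z(s)} = \ket{v^\perp_n(s)}$: after eliminating the marked block via $(I-D_{MM})^{-1}$, the reduced system has an $s$-independent Schur complement $S$ and right-hand side $\sin\theta(s)\ket{r}$, with $\ker S = \spn\{\ket{u}\}$ and $\ket{r}\perp\ket{u}$ following from the block form of $D\sqrt{\pi\tp}=\sqrt{\pi\tp}$. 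Since $S^+$ maps into $\ket{u}^\perp$, one gets $\braket{u}{z_U}=\alpha(s)$, and imposing $\braket{v_n(s)}{z(s)}=0$ and collapsing the $s$-dependence via $\sqrt{1-s}\,\tan\theta(s) = \sqrt{p_M/(1-p_M)}$ yields $\alpha(s) = -\sin^3\theta(s)\,C$ for an $s$-independent constant $C$, hence $\HT(s) = -\sin\theta(s)\,\alpha(s) = \sin^4\theta(s)\,C$, with $C=\HT(P,M)$ from $s=1$. Both proofs draw on the same two ingredients --- the two-dimensional rotation structure (equations (\ref{eq:vn(s)}), (\ref{eq:vnperp(s)}), (\ref{eq:AM-AU})) and detailed balance in the block basis --- but your Schur-complement route makes the \emph{source} of the $\sin^4\theta(s)$ factor transparent (the $\sqrt{1-s}$ scaling of the off-diagonal blocks of $D(s)$, squared against the $\sin\theta$ in the right-hand side), whereas the paper's ODE route is shorter to verify once Lemma~\ref{lem:d-dot} is in hand. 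One small inaccuracy in your narration: it is not that the non-$\sin^2\theta(s)$ terms ``cancel'' in the final contraction, but rather that all of the $s$-dependence in the various scalar contractions (which involve $s$-independent constants like $\bra{\mu}(I-D_{MM})^{-1}\ket{\mu}$ and $\bra{\mu}(I-D_{MM})^{-1}D_{MU}S^+\ket{r}$) collapses to the common factor $\sin^3\theta(s)$ in $\alpha(s)$; this is a consequence of the $\sqrt{1-s}\,\tan\theta(s)$ identity rather than of any additional cancellation.
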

Now, recalling the definition of $\theta(s)$ in equation (\ref{eq:theta(s)}), it is easy to see that the maximum in equation (\ref{eq:T in terms of HT(s)}) is attained at $s = 1$. This immediately implies that the running time of the adiabatic quantum algorithm is given by
\begin{equation}
  T = \frac{\pi}{2\varepsilon} \sqrt{\HT(P,M)},
\end{equation}
therefore providing a quadratic improvement over the classical hitting time. This also concludes the proof of Theorem~\ref{thm:main}.

\section{Conclusion and discussion\label{sect:conclusion}}

Our quantum algorithm defines a new notion of quantum hitting time, which is quadratically smaller than the classical hitting time for any reversible Markov chain and any set of marked elements. While previous approaches were subject to various restrictions, \eg, the quantum algorithm could only detect the presence of marked elements~\cite{Sze}, did not always provide full quadratic speed-up~\cite{MNRS}, or could only be applied for state-transitive Markov chains with a unique marked element~\cite{MNRS2}, our adiabatic approach only requires minimal assumptions. Indeed, it can be shown that the only remaining condition, reversibility, is necessary. Let us consider the Markov chain on a cycle $P=\frac{I+C}{2}$, where $C$ implements a clockwise shift, \ie, $C\ket{x}=\ket{(x+1) \mod n}$. This Markov chain is ergodic but not reversible. While its classical hitting time is of order $\Theta(n)$, a simple locality argument implies that any quantum operator acting locally on the cycle requires a time $\Omega(n)$ to find a marked vertex, so that a quadratic speed-up cannot be achieved. Magniez \etal\ ~\cite{MNRS2} have also shown that under reasonable conditions the quadratic speed-up is optimal. This provides evidence that our result is both as strong and as general as possible.

While our result relies on the assumption that the \emph{folk} adiabatic condition is sufficient, this assumption could be suppressed in different ways. One option would be to actually prove that the \emph{folk} Adiabatic Theorem holds in our setup, as was previously done for the adiabatic version of Grover's algorithm~\cite{jrs07}. Another option would be to circumvent adiabatic evolution altogether, by using the phase randomization technique of Boixo~\etal ~\cite{bks09}. Their technique provides a quantum circuit realizing the same evolution as the adiabatic approach with a similar running time, but without relying on the adiabatic condition. This leads to the quantum circuit algorithm described in~\cite{kmor10}.

Finally, note that in order to design the schedule $s(t)$, our algorithm requires to know $p_M$ and the order of magnitude of $\HT(P,M)$. These assumptions are standard in other quantum algorithms for this problem. In particular, a similar issue arises in Grover's algorithm when the number of marked elements is unknown. In Grover's case, there are techniques to deal with this issue~\cite{BoyerBHT98}, and similar techniques could be applied in our case. While we do not provide a full answer to these questions in the present paper, they do not present any new technical difficulty and we refer the reader to~\cite{kmor10} where a full study of these implementation issues is provided for a related quantum circuit algorithm.

\section*{Acknowledgments}
J. Roland would like to thank F. Magniez, A. Nayak, M. Santha and R. Somma for useful discussions. H. Krovi would like to thank F. Magniez for useful discussions.
This research has been supported in part by ARO/NSA under grant W911NF-09-1-0569.
M. Ozols acknowledges support from QuantumWorks.


\bibliography{adiabatic-quantum-search}

\appendix*
\section{Proof of Lemma~\ref{lem:HT}\label{app:HT}}

In this section, we will often use $\dot{f}(s)$ as a shorthand form of $\frac{d}{ds} f(s)$. We will show that the derivative of $\HT(s)$ satisfies the following lemma.

\begin{lemma}\label{lemma:derivative-HT}
The derivative of $\HT(s)$ is related to $\HT(s)$ as:
\begin{equation}
 \frac{d}{ds} \HT(s)=4\dot{\theta}(s)\frac{\cos\theta(s)}{\sin\theta(s)}\HT(s).
\end{equation}
\end{lemma}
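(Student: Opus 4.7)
The plan is to differentiate $\HT(s) = \bra{U}A(s)\ket{U}$ directly, using the defining relations of $A(s)$ from (\ref{eq:A}), which read compactly as
\begin{equation*}
  (I-D(s))A(s) = A(s)(I-D(s)) = I - \Pi_n(s), \qquad A(s)\ket{v_n(s)} = 0,
\end{equation*}
with $\Pi_n(s):=\ket{v_n(s)}\bra{v_n(s)}$. Together these imply $A(s)(I-D(s))A(s) = A(s)$; differentiating and rearranging gives
\begin{equation*}
  \dot A(s) = A(s)\,\dot D(s)\,A(s) + \dot A(s)\,\Pi_n(s) + \Pi_n(s)\,\dot A(s).
\end{equation*}
Differentiating the constraint $A(s)\ket{v_n(s)}=0$ and invoking $\frac{d}{ds}\ket{v_n(s)}=\dot\theta(s)\ket{v^\perp_n(s)}$ (from (\ref{eq:vn(s)}) and (\ref{eq:vnperp(s)})) yields $\dot A(s)\ket{v_n(s)} = -\dot\theta(s)\,A(s)\ket{v^\perp_n(s)}$, which I will use to evaluate the last two terms after sandwiching with $\bra{U}$ and $\ket{U}$.

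To handle the two $\Pi_n$-terms I rely on a structural observation: since $\ket{v_n(s)}\in\Hr:=\spn\{\ket{U},\ket{M}\}$ and $A(s)$ is Hermitian with $A(s)\ket{v_n(s)}=0$, its restriction to $\Hr$ is rank-one, namely the projector onto $\Hr$-sandwiched $A(s)$ equals $\alpha(s)\ket{v^\perp_n(s)}\bra{v^\perp_n(s)}$ for some scalar $\alpha(s)$. Since $\bra{U}v^\perp_n(s)\rangle=-\sin\theta(s)$, comparison with $\HT(s)=\bra{U}A(s)\ket{U}$ forces $\alpha(s) = \HT(s)/\sin^2\theta(s)$, and each $\Pi_n$-contribution then evaluates cleanly to $\dot\theta(s)\cot\theta(s)\,\HT(s)$, for a subtotal of $2\dot\theta\cot\theta\,\HT(s)$.

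The heart of the argument is the remaining term $\bra{U}A(s)\dot D(s)A(s)\ket{U}$. Denoting by $\Pi_M$ the orthogonal projector onto the marked-block (so $\Pi_M\ket{U}=0$), a short calculation from the block form of $D(s)$ in Section~\ref{sect:Discriminant} yields
\begin{equation*}
  \dot D(s) = \frac{1}{2(1-s)}\bigl\{\Pi_M,\,I-D(s)\bigr\}.
\end{equation*}
Substituting, using the identities $(I-D(s))A(s) = A(s)(I-D(s)) = I - \Pi_n(s)$ to absorb the $(I-D(s))$ factors, and exploiting $\Pi_M\ket{U}=0$, the only surviving contributions are $\bra{U}A(s)\Pi_M\Pi_n(s)\ket{U}$ and $\bra{U}\Pi_n(s)\Pi_M A(s)\ket{U}$, both of which evaluate via the $\Hr$-restriction of $A(s)$ to $\cos^2\theta(s)\,\HT(s)$. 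Thus $\bra{U}A(s)\dot D(s)A(s)\ket{U} = \cos^2\theta(s)\,\HT(s)/(1-s)$. A short differentiation of (\ref{eq:theta(s)}) confirms $\cos^2\theta(s)/(1-s) = 2\dot\theta\cot\theta$, so this contributes another $2\dot\theta\cot\theta\,\HT(s)$. Summing the three contributions gives $\dot{\HT}(s) = 4\dot\theta\cot\theta\,\HT(s)$.

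The main obstacle is establishing the compact identity for $\dot D(s)$; once that is in place, the rest of the argument is essentially forced by the defining relations of $A(s)$ and by the two-dimensional $\Hr$-structure already set up in Section~\ref{sect:Rotation}. The factor $1/(1-s)$ is singular at $s=1$, but this is not a genuine obstruction since the entire argument can be carried out for $s\in[0,1)$ and Lemma~\ref{lem:HT} only requires integration on this interval before extending by continuity.
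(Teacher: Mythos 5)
Your proof is correct and tracks the paper's argument in all essential respects: both prove the key identity $\dot D(s) = \frac{1}{2(1-s)}\bigl\{\Pi_M, I-D(s)\bigr\}$, both differentiate $\HT(s) = \bra{U}A(s)\ket{U}$, and both reduce the resulting matrix elements to scalars via the two-dimensional $\Hr$-structure of $\ket{v_n(s)}$. The one genuine difference is how $\dot A(s)$ is organized. The paper writes $A = B^{-1} - \Pi_n$ with $B = I-D+\Pi_n$ invertible and differentiates the inverse, which produces extra $\sin^2\theta$- and $\cos\theta\sin\theta$-terms that only cancel at the very end (equation~(\ref{eq:UAU})). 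You instead differentiate the group-inverse-type identity $A(I-D)A = A$ together with the kernel constraint $A\ket{v_n} = 0$, which yields the cleaner decomposition $\dot A = A\dot D A + \dot A\Pi_n + \Pi_n\dot A$ directly; your approach also replaces the paper's identity $\bra{U}A\ket{M} = -\cot\theta\,\bra{U}A\ket{U}$ by the equivalent observation that $\Pi_{\Hr}A(s)\Pi_{\Hr} = \bigl(\HT(s)/\sin^2\theta(s)\bigr)\ket{v^\perp_n(s)}\bra{v^\perp_n(s)}$. This is a modest improvement in bookkeeping rather than a new idea, but it does avoid the cancellations. One small slip in the write-up: the surviving contributions to $\bra{U}A\{\Pi_M,I-D\}A\ket{U}$ are $-\bra{U}A\Pi_M\Pi_n\ket{U}$ and $-\bra{U}\Pi_n\Pi_M A\ket{U}$ (each equal to $\cos^2\theta\,\HT$); you dropped the minus signs in naming the terms, though your final value $\bra{U}A\dot D A\ket{U} = \cos^2\theta\,\HT/(1-s)$ is correct.
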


Note that Lemma~\ref{lem:HT} follows directly from there, since $\HT(s) = \HT(P,M) \, \sin^4\theta(s)$ satisfies the differential equation for $\HT(s)$ with the boundary condition $\HT(1)=\HT(P,M)$. Therefore, it remains to prove Lemma~\ref{lemma:derivative-HT}.

Before proving Lemma~\ref{lemma:derivative-HT}, let us consider the derivative of the discriminant $D(s)$. Let $\Pi_M=\sum_{x\in M}\proj{x}$ be the projector onto the $m$-dimensional subspace spanned by marked vertices, and let $\{X,Y\}:=XY+YX$ be the anticommutator of $X$ and $Y$.

\begin{lemma}\label{lem:d-dot}
$  \dot{D}(s) = \frac{1}{2(1-s)} \bigl\{ \Pi_M, I-D(s) \bigr\}$.
\end{lemma}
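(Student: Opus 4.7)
The plan is to prove the lemma by a direct entrywise computation, reducing everything back to the block structure in equation~(\ref{eq:Block P(s)}). First I would compute $\dot{P}(s)$. Since only the lower block rows of $P(s)$ depend on $s$, differentiating gives
\begin{equation}
\dot{P}(s) = \mx{0 & 0 \\ -P_{MU} & I - P_{MM}},
\end{equation}
and comparing with $\Pi_M[I-P(s)]$ (which vanishes on rows outside $M$ and equals $(1-s)\,\dot{P}(s)$ on the $M$-rows, as can be read off directly from (\ref{eq:Block P(s)})), I would obtain the clean identity
\begin{equation}
\dot{P}(s) \;=\; \frac{1}{1-s}\,\Pi_M\bigl[I - P(s)\bigr].
\end{equation}

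Next, starting from $D_{xy}(s)^2 = P_{xy}(s)P_{yx}(s)$ (equation~(\ref{eq:D(s)})), I would differentiate to get
\begin{equation}
2\,D_{xy}(s)\,\dot{D}_{xy}(s) \;=\; \dot{P}_{xy}(s)\,P_{yx}(s) + P_{xy}(s)\,\dot{P}_{yx}(s).
\end{equation}
Plugging in the expression for $\dot{P}$ above, the entry $\dot{P}_{xy}(s)$ equals $\frac{1}{1-s}\mathbf{1}_{x\in M}(\delta_{xy} - P_{xy}(s))$. A short case check ($x,y$ both unmarked; exactly one marked; both marked, with $x=y$ or $x\ne y$) shows that in every case the right-hand side collapses to
\begin{equation}
\frac{1}{1-s}\bigl(\mathbf{1}_{x\in M} + \mathbf{1}_{y\in M}\bigr)\,D_{xy}(s)\,\bigl[\delta_{xy} - D_{xy}(s)\bigr].
\end{equation}

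Finally, I would recognise the prefactor $\mathbf{1}_{x\in M}+\mathbf{1}_{y\in M}$ together with $\delta_{xy}-D_{xy}(s)$ as the $(x,y)$-entry of the anticommutator $\{\Pi_M,\,I-D(s)\}$, since $(\Pi_M(I-D))_{xy}=\mathbf{1}_{x\in M}(\delta_{xy}-D_{xy})$ and $((I-D)\Pi_M)_{xy}=\mathbf{1}_{y\in M}(\delta_{xy}-D_{xy})$. Dividing by $2D_{xy}(s)$ then yields the desired identity; the degenerate case $D_{xy}(s)=0$ forces both $\dot{D}_{xy}(s)=0$ and $(\delta_{xy}-D_{xy}(s))D_{xy}(s)=0$, so the formula holds by continuity. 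The only mild obstacle is the entrywise case analysis, but once the identity $\dot{P}(s)=\tfrac{1}{1-s}\Pi_M[I-P(s)]$ is in hand, the rest of the bookkeeping is routine.
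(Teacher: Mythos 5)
Your proof is correct, but it takes a genuinely different route from the paper. The paper works entirely at the level of $D(s)$: it writes out the block structure of $D(s)$ from equation~(\ref{eq:Block D(s)}), differentiates block-by-block, and then verifies the one-line matrix identity $\dot{D}(s)+\frac{1}{2(1-s)}\{\Pi_M,D(s)\}=\frac{1}{1-s}\Pi_M$ directly, with no case analysis at all. You instead descend to $P(s)$, first proving the nice intermediate identity $\dot{P}(s)=\frac{1}{1-s}\Pi_M[I-P(s)]$, and then lift back to $D(s)$ via the entrywise relation $D_{xy}^2=P_{xy}P_{yx}$ together with a case check on whether $x,y$ are marked. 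Your route is somewhat longer and requires the case analysis plus the degenerate-case discussion, but it has the conceptual benefit of exhibiting the $P$-level analogue of the lemma explicitly, which the paper's block computation keeps implicit. The paper's proof buys brevity and avoids the division-by-$D_{xy}$ issue altogether.

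One small inaccuracy in your degenerate-case remark: it is not true that $D_{xy}(s)=0$ forces $\dot{D}_{xy}(s)=0$. For example, if $x=y$ is marked and $P_{xx}=0$, then $D_{xx}(s)=s$, so $D_{xx}(0)=0$ while $\dot{D}_{xx}(0)=1$. In all other degenerate configurations $D_{xy}$ vanishes identically (forcing $\dot{D}_{xy}\equiv 0$), so the only problematic point is this isolated zero. The continuity argument you invoke in the same sentence is the correct fix and does close the gap — the identity holds on a dense set of $s$ where $D_{xy}(s)\neq 0$ and both sides are continuous — so the conclusion stands, but the claim about $\dot{D}_{xy}(s)=0$ should be dropped.
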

\begin{proof}
Note from equation (\ref{eq:Block P(s)}) that $D(s)$ has the following block structure:
\begin{equation}
  D(s) = \mx{
    \sqrt{P_{UU} \circ P\tp_{UU}} & \sqrt{(1-s) (P_{UM} \circ P\tp_{MU})} \\
    \sqrt{(1-s) (P_{MU} \circ P\tp_{UM})} & (1-s) \sqrt{P_{MM}\circ P\tp_{MM}} + s I }.
  \label{eq:Block D(s)}
\end{equation}
Hence,
\begin{equation}
  \dot{D}(s) = \mx{
    0 & -\frac{1}{2\sqrt{1-s}} \sqrt{P_{UM} \circ P\tp_{MU}} \\
    -\frac{1}{2\sqrt{1-s}} \sqrt{P_{MU} \circ P\tp_{UM}} & I -  \sqrt{P_{MM}\circ P\tp_{MM}}}.
\end{equation}
Observe that $\dot{D}(s)+\frac{1}{2(1-s)}\{\Pi_M,D(s)\}=\frac{1}{1-s}\Pi_M$, which implies the lemma.
\end{proof}

We are now ready to prove Lemma~\ref{lemma:derivative-HT}.
\begin{proof}[Proof of Lemma~\ref{lemma:derivative-HT}]
In this proof we will often omit to write the dependence on $s$ explicitly. From equation (\ref{eq:T and HT(s)}) we have
\begin{equation}
  \frac{d}{ds} \HT\ofs = \bra{U} \dot{A}\ofs \ket{U}.
  \label{eq:dHT}
\end{equation}
Note that $A\ofs = B\ofs^{-1}-\Pi_n\ofs$, where
\begin{align}
  B\ofs &:= I - D\ofs + \Pi_n\ofs, &
  \Pi_n\ofs := \proj{v_n\ofs}.
\end{align}
For any invertible matrix $M\ofs$ we have $\frac{d}{ds} (M^{-1}) = -M^{-1} \dot{M} M^{-1}$. Therefore,
\begin{equation}
  \dot{A}\ofs = - B\ofs^{-1} \bigl( -\dot{D}\ofs+\dot{\Pi}_n\ofs \bigr) B\ofs^{-1} - \dot{\Pi}_n\ofs.
\end{equation}
Hence, we have $\frac{d}{ds} \HT\ofs = h_1 + h_2 + h_3$, where
\begin{align}
  h_1 &:= \bra{U} B\ofs^{-1} \dot{D}\ofs B\ofs^{-1} \ket{U}, \\
  h_2 &:=-\bra{U} B\ofs^{-1} \dot{\Pi}_n\ofs B\ofs^{-1} \ket{U}, \\
  h_3 &:=-\bra{U} \dot{\Pi}_n\ofs \ket{U}.
\end{align}
Let us evaluate each of these terms separately. We can use Lemma~\ref{lem:d-dot} and the definition of $B$ to express the first term $h_1$ as follows:
\begin{align}
  2(1-s) h_1
   &= \bra{U} B\ofs^{-1} \{\Pi_M,I-D\ofs\} B\ofs^{-1} \ket{U} \\
   &= \bra{U} B\ofs^{-1} \bigl(\{\Pi_M,B\ofs\}-\{\Pi_M,\Pi_n\ofs\}\bigr) B\ofs^{-1} \ket{U} \\
   &= \bra{U} \{B\ofs^{-1},\Pi_M\} \ket{U} - \bra{U} B\ofs^{-1} \{\Pi_M,\Pi_n\ofs\} B\ofs^{-1} \ket{U}.
\end{align}
Note that $\Pi_M \ket{U} = 0$, thus the first term vanishes. Also note that $B\ofs^{-1} \ket{v_n\ofs} = \ket{v_n\ofs}$ and $B\ofs^{-1}$ is Hermitian, thus
\begin{equation}
  2(1-s) h_1 = -2 \bra{U} B\ofs^{-1} \Pi_M \Pi_n\ofs \ket{U}.
  \label{eq:h1.2}
\end{equation}
Note from equation (\ref{eq:vn(s)}) that $\Pi_M \ket{v_n\ofs} = \sin \theta\ofs \ket{M}$. Moreover, using equation (\ref{eq:AM-AU}) we can simplify equation (\ref{eq:h1.2}) even further:
\begin{align}
 2(1-s) h_1
  &= -2 \cos \theta\ofs \sin \theta\ofs \bra{U} B\ofs^{-1} \ket{M} \\
  &= -2 \cos \theta\ofs \sin \theta\ofs \bra{U} (A\ofs + \Pi_n\ofs) \ket{M} \\
  &= -2 \cos \theta\ofs \sin \theta\ofs \bra{U} A\ofs \ket{M} - 2 \cos^2\theta\ofs \sin^2\theta\ofs \\
  &=  2 \cos^2 \theta\ofs \bigl( \bra{U} A\ofs \ket{U} - \sin^2\theta\ofs \bigr).
\end{align}

Let us now consider the second term $h_2$:
\begin{align}
 h_2
  &= -\bra{U} B\ofs^{-1} \dot{\Pi}_n\ofs B\ofs^{-1} \ket{U} \\
  &= -\cos \theta\ofs \bigl( \bra{\dot{v}_n\ofs} B\ofs^{-1} \ket{U} + \bra{U} B\ofs^{-1} \ket{\dot{v}_n\ofs} \bigr) \\
  &= -2 \cos \theta\ofs \bra{U} \bigl( A\ofs + \Pi_n\ofs \bigr) \ket{\dot{v}_n\ofs} \\
  &= -2 \dot{\theta}\ofs \cos \theta\ofs \bigl(-\sin \theta\ofs \bra{U} A\ofs \ket{U} + \cos \theta\ofs \bra{U} A\ofs \ket{M}\bigr) \\
  &= -2 \dot{\theta}\ofs \cos \theta\ofs \left(-\sin \theta\ofs \bra{U} A\ofs \ket{U} - \frac{\cos^2\theta\ofs}{\sin\theta\ofs} \bra{U} A\ofs \ket{U} \right) \\
  &=  2 \dot{\theta}\ofs \frac{\cos \theta\ofs}{\sin \theta\ofs} \bra{U} A\ofs \ket{U},
\end{align}
where we have used the fact that
\begin{equation}
  \ket{\dot{v}_n\ofs} = \dot{\theta}\ofs \bigl( -\sin \theta\ofs \ket{U} + \cos \theta\ofs \ket{M} \bigr),
\end{equation}
and $\Pi_n \ket{\dot{v}_n\ofs} = 0$. Similarly, for the last term $h_3$ we have
\begin{equation}
  h_3 = -\bra{U} \dot{\Pi}_n\ofs \ket{U}
      = 2 \dot{\theta}\ofs \cos \theta\ofs \sin \theta\ofs.
\end{equation}

Putting all the terms back together, we have
\begin{align}
  \label{eq:UAU}
  \bra{U}\dot{A}\ket{U}
   &= \frac{\cos^2\theta}{1-s} \bigl(\bra{U}A\ket{U}-\sin^2\theta\bigr)
      +2\dot{\theta}\frac{\cos\theta}{\sin\theta}\bra{U}A\ket{U}+2\dot{\theta}\cos\theta\sin\theta \\
   &= \frac{\cos\theta}{\sin\theta}\left(2\dot{\theta}+\frac{\cos\theta\sin\theta}{1-s}\right)\bra{U}A\ket{U}
      +\cos\theta\sin\theta\left(2\dot{\theta}-\frac{\cos\theta\sin\theta}{1-s}\right). \nonumber
\end{align}
From the definition of $\theta$ in equation (\ref{eq:theta(s)}) it is straightforward to check that
\begin{equation}
  2\dot{\theta}
  = \frac{\cos\theta\ofs\sin\theta\ofs}{1-s}
\end{equation}
which together with equations (\ref{eq:dHT}) and (\ref{eq:UAU}) implies the lemma.
\end{proof}

\end{document}